\newtheorem{theorem}{Theorem}
\newcommand{\PP}{\mathbb{P}}
\newcommand{\E}{\mathbb{E}}
\newcommand{\s}{s_{\rm th}}
\newcommand{\lcr}{{\rm LCR}_{S(t)}(\s)}
\newcommand{\afd}{{\rm AFD}_{S(t)}(\s)}
\newlength \figwidth
\begin{document}
\title{Continuous Fluid Antenna Systems:\\ Modeling and Analysis}

\author{Constantinos Psomas, \IEEEmembership{Senior Member, IEEE}, Peter J. Smith, \IEEEmembership{Fellow, IEEE}, Himal A. Suraweera, \IEEEmembership{Senior Member, IEEE}, and Ioannis Krikidis, \IEEEmembership{Fellow, IEEE}\vspace{-9mm}
\thanks{Constantinos Psomas and Ioannis Krikidis are with the Department of Electrical and Computer Engineering, University of Cyprus, 1678 Nicosia, Cyprus (e-mail: psomas@ucy.ac.cy; krikidis@ucy.ac.cy).
	
Peter J. Smith is with the School of Mathematics and Statistics, Victoria University of Wellington, Wellington 6012, New Zealand	(e-mail: peter.smith@ecs.vuw.ac.nz).

Himal A. Suraweera is with the Department of Electrical and Electronic Engineering, University of Peradeniya, Peradeniya 20400, Sri Lanka (e-mail: himal@ee.pdn.ac.lk).

This work has received funding from the European Research Council (ERC) under the European Union's Horizon 2020 research and innovation programme (Grant agreement No. 819819). It was also co-funded by the European Regional Development Fund and the Republic of Cyprus through the Research and Innovation Foundation, under the project CONCEPT/0722/0123.}}

\maketitle

\begin{abstract}
Fluid antennas (FAs) is a promising technology for introducing flexibility and reconfigurability in wireless networks. Recent research efforts have highlighted the potential gains that can be achieved in comparison to conventional antennas. These works assume that the FA has a discrete number of positions that the liquid can take. However, from a practical standpoint, the liquid moves in a continuous fashion to any point inside the FA. In this paper, we focus on a continuous FA system (CFAS) and present a general framework for its design and analytical evaluation. In particular, we derive closed-form analytical expressions for the level crossing rate (LCR) and the average fade duration of the continuous signal-to-interference ratio (SIR) process over the FA's length. Then, by leveraging the LCR expression, we characterize the system's outage performance with a bound on the cumulative distribution function of the SIR's supremum. Our results confirm that the CFAS outperforms its discrete counterpart and thus provides the performance limits of FA-based systems.
\end{abstract}

\begin{IEEEkeywords}
Fluid antennas, level crossing rate, average fade duration, outage probability.
\end{IEEEkeywords}

\section{Introduction}
The rapid evolution of wireless networks has been achieved through numerous technological advancements such as multiple-input multiple-output (MIMO) systems \cite{HEATH}. Nonetheless, from a practical point-of-view, conventional antennas are rigid and inflexible as they are mostly made of metal. Moreover, they are developed to conform to a fixed configuration (e.g., specific frequency bands) and their use in devices is greatly limited by the available space. As such, the establishment of reconfigurable wireless networks has been of significant importance in recent years \cite{AEM}.

A promising approach are the so-called fluid antennas (FAs), which provide reconfigurability as well as flexibility at the radio frequency front-end \cite{JOM}. Specifically, the reconfiguration of FAs is achieved by displacing a metallic (or non-metallic) liquid (e.g., Mercury, Galinstan) inside a dielectric holder. This is done in a programmable and controllable manner through the use of microfluidic and electrowetting techniques. Therefore, with just a single structure, FAs can support multiple switchable configurations for different network requirements (e.g., changes in operating frequency or radiation pattern) \cite{JOM} or for dynamic environments (e.g., due to mobility or interference) \cite{KIT}. In view of this, the FA technology provides the potential to address fundamental design restrictions and push further the performance limits of wireless networks.

The performance of FAs in wireless communications has already been investigated for several scenarios \cite{KIT2}-\cite{BT}. The work in \cite{KIT2} focuses on the multiple access and demonstrates that significant capacity gains can be achieved as hundred of users, each with a single FA, can be supported. In \cite{KIT3}, it is shown that a single FA can outperform a conventional multi-antenna maximum ratio combining (MRC) system when the number of FA ports is large enough. This work is extended to consider ergodic capacity in \cite{KIT4} for the ergodic capacity where it is established that an FA can perform as well as an MRC system, even with a small size FA. The authors of \cite{CP} showed how the performance of FAs is negatively affected by outdated channel estimates and proposed prediction and coded modulation schemes to overcome these imperfections. The benefits provided by FAs in terms of physical layer security are demonstrated in \cite{BT}, where without any channel state information at the base station, the FA system surpasses conventional systems. Finally, the work in \cite{WKN} considers MIMO FA systems and proves that the diversity and multiplexing tradeoff achieved by MIMO FA systems exceeds that delivered by traditional MIMO. The aforementioned works assume that the FA can move the liquid to a discrete set of pre-defined positions, referred to as ports. However, in practice the liquid can be displaced at any position inside its available space due to its analog nature of operation.

As such, in this paper, we address this issue by providing an analytical framework for a continuous FA system (CFAS) in which the liquid's position follows a continuous process. We derive novel closed-form expressions for fundamental metrics such as the level crossing rate (LCR) and the average fade duration (AFD) of the continuous signal-to-interference ratio (SIR) process over the FA's length. Then, by utilizing the LCR expression, we characterize the outage performance with a bound on the cumulative distribution function (CDF) of the SIR's supremum. Computer simulations validate the LCR and AFD expressions but also show the accuracy of the CDF bound at the upper tail. Our results demonstrate the gains achieved by the CFAS in comparison to its discrete counterpart. The presented analytical framework provides a complete and general methodology for the design, modeling and analysis of practical FA-based systems.

\section{System Model}\label{sys_model}
Consider the downlink in a simple point-to-point topology consisting of a conventional single-antenna transmitter and a single-FA receiver. We assume that, apart from the desired signal, the receiver also experiences $N$ interfering signals. Moreover, all wireless links are assumed to exhibit Rayleigh fading. The receiver displaces the liquid at any position, say $t$, inside a dielectric holder of linear dimension and length $T$, $0\leq t\leq T$; we assume that the displacement can occur instantly and without delays \cite{KIT2, KIT3}. Let $g(t) \sim \mathcal{CN}(0,\beta_0)$ be the desired signal's channel coefficient at the $t$-th position and $h_i(t) \sim \mathcal{CN}(0,\beta_I)$, $i\in\{1,\dots,N\}$, the interfering channels, assumed to be independent and identically distributed. Then, the received signal at the $t$-th position is given by
\begin{align}
y(t) = g(t) x_0 + \sum_{i=1}^N h_i(t) x_i + n(t), ~0\leq t \leq T,
\end{align}
where $x_0$ is the transmitted data symbol with $\E[|x_0|^2] = 1$, $x_i$ is the data symbol from the $i$-th interferer with $\E[|x_i|^2] = 1$, and $n(t)$ is additive white Gaussian noise (AWGN) with zero mean and variance $\sigma^2$. Since the liquid can take any position inside the holder, the signals $y(t)$, $0\leq t \leq T$, are assumed to be spatially correlated. By considering two-dimensional isotropic scattering, the correlation between the received signals at two positions spaced by $\tau$ can be modeled by the well-known Jake's model as \cite{HEATH}
\begin{align}\label{correlation}
\rho(\tau) = J_0\Big(2\pi \frac{\tau}{\lambda}\Big),
\end{align}
where $\lambda$ is the signal's wavelength and $J_0(\cdot)$ is the zeroth order Bessel function of the first kind. Note that any correlation model of the form
\begin{align}\label{correlation2}
\rho(\tau) = 1 - b \tau^2 + o(\tau^2),	
\end{align}
for small $\tau$ and $b > 0$ could also be applied here, where $o(\cdot)$ is the little-o notation; in our case, based on the series representation of \eqref{correlation} \cite{ISG}, we have $b = \pi^2/\lambda^2$.

In this paper, we focus on an interference-limited network, i.e., the noise power is negligible compared to the interference. As such, we focus on the SIR at the $t$-th position of the FA receiver, which can be written as
\begin{align}
S(t) = \frac{|g(t)|^2}{I},
\end{align}
where $I = \sum_{i=1}^N |h_i(t)|^2$ is the aggregate interference power. Fig. \ref{fig1} depicts the considered linear FA as well as the SIR process over the liquid's position inside the holder. Therefore, to maximize its performance, the FA receiver will displace the liquid to some position, $t_0$, that maximizes the above ratio. In other words, by assuming full channel knowledge, the receiver displaces the liquid at the location that achieves
\begin{align}\label{sup}
S^* &= \sup_{0\leq t\leq T} \{ S(t) \} = \left[\sup_{0\leq t\leq T} \left\{\frac{|g(t)|}{\sqrt{I}}\right\}\right]^2,
\end{align}
where $\sup_{0\leq t\leq T} \{ S(t) \}$ denotes the supremum of $S(t)$.

\begin{figure}\centering
  \includegraphics[width=\figwidth]{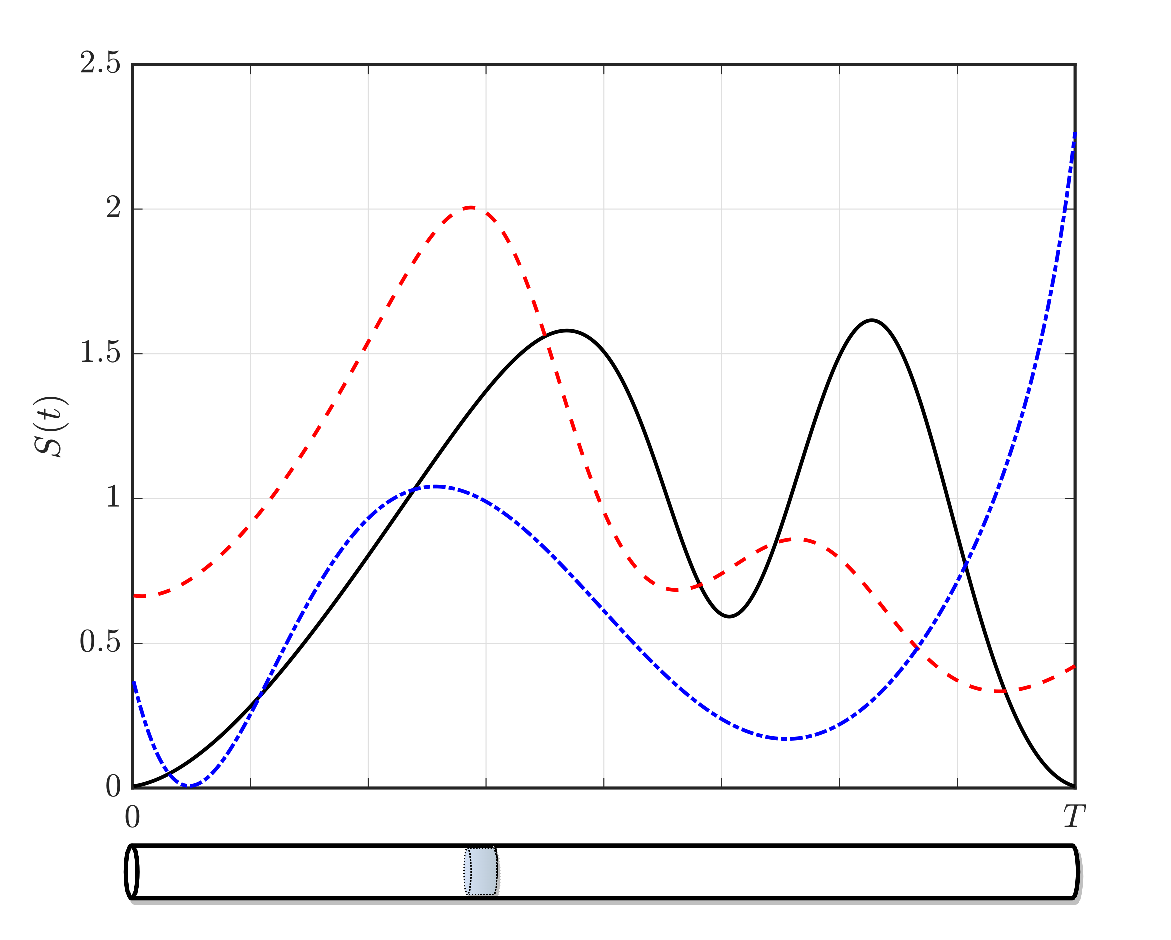}\vspace{-2mm}
  \caption{Three arbitrary scenarios of the SIR process over the position $t$.}\label{fig1}
\end{figure}

\section{Continuous SIR Processes in FAs}
To facilitate our analysis, we write \eqref{sup} as
\begin{align}
S^* &= \left[\sup_{0\leq t\leq T} \left\{\frac{|g(t)|}{\sqrt{I}}\right\}\right]^2 \triangleq \left[\sup_{0\leq t\leq T} \left\{\frac{\sqrt{\beta_0} r_0(t)}{\sqrt{\beta_I} r_I(t)}\right\}\right]^2,
\end{align}
with $\mathbb{E}[|g(t)|^2] = \beta_0$ and $\mathbb{E}[I] = \beta_I N$ so $r_0^2(t)$ is exponentially distributed with unit mean, i.e. $r_0^2(t) \sim \exp(1)$, and $r^2_I(t)$ is gamma distributed with shape parameter $N$ and unit scale parameter, i.e. $r^2_I(t) \sim {\rm \Gamma}(N,1)$; $\mathbb{E}[\cdot]$ denotes the expectation operator.

Now, ideally we would derive the distribution of $S^*$ but this seems to be challenging, if not impossible\footnote{Since the work in \cite{RD} from the mid-80s, there has been no further development on this kind of distribution. Moreover, in the communications literature, most works either deal with a small number of branches or with equally correlated branches, e.g. \cite{YC}.}. Instead, we focus on the LCR of $S(t)$ and leverage this metric to gain information about $S^*$. The LCR refers to the rate at which the SIR $S(t)$ crosses a level $\s$ in a positive (or negative) direction over the length of the FA and is given by \cite{NZ}
\begin{align}\label{lcr}
\lcr = \int_0^\infty \dot{s} f_{S,\dot{S}}(\s,\dot{s}) d\dot{s},
\end{align}
where $f_{S,\dot{S}}(\cdot,\cdot)$ is the joint probability distribution function (PDF) of $S$ and its derivative $\dot{S}$. It is important to point-out that the analytical expression for the LCR of such SIR processes is novel and has not appeared before in the literature.

\begin{theorem}\label{lcr_thm}
The LCR of a CFAS's SIR process over a threshold $\s$ is given by
\begin{align}
\lcr = \frac{\beta_0^N\Gamma(N+\frac{1}{2})}{\lambda\Gamma(N)(\beta_0+\beta_I \s)^N}\sqrt{\frac{2\pi\beta_I \s}{\beta_0}},
\end{align}
where $\Gamma(\cdot)$ denotes the gamma function.
\end{theorem}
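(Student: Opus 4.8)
The plan is to use the standard Rice formula \eqref{lcr}, which requires the joint PDF $f_{S,\dot S}(\s,\dot s)$. I would obtain this in two stages. First, condition on the interference power $I$: given $I$, the process $S(t) = |g(t)|^2/I$ is just a scaled squared-envelope (i.e. exponential-at-each-$t$) process, since $g(t)$ is a correlated complex Gaussian process with $\E[|g(t)|^2]=\beta_0$. For such a process, the classical result (going back to Rice, see \cite{NZ}) is that $S(t)/(\beta_0/I)$ is a unit-mean exponential process whose derivative, conditioned on the level, is zero-mean Gaussian; concretely, at a fixed $t$ the pair $(|g|^2,\tfrac{d}{dt}|g|^2)$ has the property that $\dot{|g|^2}$ given $|g|^2 = x$ is $\mathcal N(0, 2 b' x)$-type, where the variance coefficient comes from $-\ddot\rho(0)$. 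Here $\rho(\tau) = 1 - b\tau^2 + o(\tau^2)$ with $b = \pi^2/\lambda^2$ from \eqref{correlation2}, so the relevant second-spectral-moment constant is $2b = 2\pi^2/\lambda^2$. This gives a closed form for $f_{S,\dot S \mid I}(\s,\dot s)$.

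Next I would integrate out $I$. Since $r_I^2 \sim \Gamma(N,1)$ and $I = \beta_I r_I^2$, the density of $I$ is a Gamma$(N,\beta_I)$ density. Plugging the conditional joint PDF into \eqref{lcr} and carrying out the $\dot s$-integral first (a Gaussian-type integral over the positive half-line, producing a factor proportional to $\sqrt{x}$ where $x=\s$ scaled by $\beta_0/I$), I am left with an integral over $I$ of the form $\int_0^\infty I^{a} e^{-c I}\,dI$ for appropriate $a$ and $c$ depending on $N$, $\beta_0$, $\beta_I$, $\s$. Recognising this as a Gamma integral yields the claimed $\Gamma(N+\tfrac12)/\Gamma(N)$ ratio, the $(\beta_0 + \beta_I\s)^{-N}$ factor (from completing the exponent after the $\dot s$-integral contributes a shift), the $\beta_0^N$ prefactor, and the $\tfrac{1}{\lambda}\sqrt{2\pi\beta_I\s/\beta_0}$ term (the $1/\lambda$ coming from $\sqrt{2b} = \sqrt{2\pi^2/\lambda^2} = \pi\sqrt2/\lambda$ absorbed with a $1/\pi$ from the Gaussian normalisation).

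The main obstacle I anticipate is getting the conditional joint density $f_{S,\dot S \mid I}$ exactly right — in particular the dependence of $\mathrm{Var}(\dot S \mid S=\s, I)$ on both $\s$ and $I$, since the scaling $S=|g|^2/I$ means the derivative variance scales like $\s/I$ (dimensionally, one power of the level times the spectral constant, divided by $I$). A clean way to handle this is to first treat $R(t) = |g(t)|$ (a Rayleigh-envelope process with known LCR $\sqrt{2\pi b}\,\mu\, e^{-\mu^2}$ at level $\mu$, by Rice's envelope formula), note that $S(t) = (\beta_0/I) R(t)^2$ is a monotone transformation so levels map as $\s \leftrightarrow \mu = \sqrt{I\s/\beta_0}$ with the crossing rate \emph{invariant} under this deterministic reparametrisation, and thereby write $\lcr \mid I = \sqrt{2\pi b}\,\sqrt{I\s/\beta_0}\,e^{-I\s/\beta_0}$. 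Then the only remaining work is $\E_I\!\big[\sqrt{I}\,e^{-I\s/\beta_0}\big]$ against the Gamma$(N,\beta_I)$ density, which is immediate. I would double-check the envelope-process second moment $-\ddot\rho(0) = 2b = 2\pi^2/\lambda^2$ and the factor of $2$ conventions in Rice's formula, as these are the usual places sign or constant errors creep in; a sanity check at $N=1$ (where $\Gamma(3/2)/\Gamma(1) = \sqrt\pi/2$) against known single-interferer SIR results would confirm the constants.
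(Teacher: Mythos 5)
There is a genuine gap, and it is conceptual rather than a matter of constants: you condition on the aggregate interference $I$ as if it were a single static random variable, so that given $I$ the process $S(t)=|g(t)|^2/I$ is a deterministically rescaled squared Rayleigh envelope and the level $\s$ maps to a fixed envelope level $\mu=\sqrt{I\s/\beta_0}$. In this model, however, the interference is itself a spatial process: $I=\sum_{i=1}^N|h_i(t)|^2$ with each $h_i(t)$ a correlated Gaussian process in the position $t$ (the paper writes the supremum as $\sup_t\,\sqrt{\beta_0}\,r_0(t)/(\sqrt{\beta_I}\,r_I(t))$ with $r_I^2(t)\sim\Gamma(N,1)$ a process, and its proof explicitly differentiates $r_I$). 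Consequently $\dot S$ contains a term proportional to $\dot I$, your level map is a randomly moving barrier rather than a deterministic reparametrisation, and the crossing rate is not invariant under it. The paper instead works with $R(t)=r_0(t)/r_I(t)$ and, conditioned on $R=r_{\rm th}$ and $r_I=y$, gets $\dot R=\dot r_0/y-r_{\rm th}\dot r_I/y\sim\mathcal N\bigl(0,\,b(1+r_{\rm th}^2)/y^2\bigr)$; the $b\,r_{\rm th}^2/y^2$ piece is exactly the interference-dynamics contribution your argument drops. Quantitatively, your route (with the Rice constant corrected) yields
\begin{align*}
\frac{\sqrt{2\pi}\,\Gamma(N+\tfrac12)}{\lambda\,\Gamma(N)}\,\frac{\beta_0^{N}\sqrt{\beta_I\s}}{(\beta_0+\beta_I\s)^{N+\frac12}},
\end{align*}
which is the theorem's expression multiplied by $\sqrt{\beta_0/(\beta_0+\beta_I\s)}$; in particular it decays like $\s^{-N}$ rather than the correct $\s^{\frac12-N}$ in the upper tail, so the discrepancy is not a harmless constant and would not be fixed by tightening conventions.

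The rest of your plan is sound and close to the paper's: Rice's formula, the independence and Gaussianity of the envelope derivative with variance $b=\pi^2/\lambda^2$ from \eqref{correlation2}, and a terminal Gamma integral over the interference power. (Your quoted Rayleigh LCR constant $\sqrt{2\pi b}\,\mu e^{-\mu^2}$ should be $\sqrt{2b/\pi}\,\mu e^{-\mu^2}$, but you flagged that yourself.) The fix is to keep $r_I(t)$ as a process, include both $\dot r_0$ and $\dot r_I$ in $\dot R$, carry the resulting $(1+r_{\rm th}^2)$ factor through the half-line Gaussian integral, and only then integrate against the joint density of $(R,r_I)$ obtained from the transformation $(r_0^2,r_I^2)\mapsto(r_0/r_I,r_I)$.
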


\begin{proof}
See Appendix \ref{prf_lcr_thm}.
\end{proof}

Using the LCR, we can immediately obtain the AFD, defined by \cite{NZ}
\begin{align}\label{afd}
\afd = \frac{\mathbb{P}(S(t) < \s)}{\lcr},
\end{align}
where $\mathbb{P}(S(t) < \s)$ is the CDF of $S(t)$. The AFD is the average time that $S(t)$ remains below a level $\s$ after a downcrossing.

\begin{theorem}\label{afd_thm}
The AFD for a CFAS's SIR process is given by
\begin{align}
\afd &= \frac{\lambda\Gamma(N)}{\Gamma(N+\frac{1}{2})}\frac{(\s\beta_I+\beta_0)^N - \beta_0^N}{\sqrt{2\pi\beta_I \beta_0^{2N-1} \s}}.
\end{align}
\end{theorem}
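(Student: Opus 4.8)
The plan is to start from the definition of the AFD in \eqref{afd}, which expresses $\afd$ as the ratio of the CDF $\mathbb{P}(S(t) < \s)$ to the level crossing rate $\lcr$. Since Theorem~\ref{lcr_thm} already supplies $\lcr$ in closed form, the only missing ingredient is the marginal CDF of the SIR at a fixed position $t$; once this is available, the statement follows by direct substitution into \eqref{afd} and algebraic simplification.

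First I would compute $\mathbb{P}(S(t) < \s)$. At a fixed $t$, $|g(t)|^2$ is exponential with mean $\beta_0$ and, independently, $I = \sum_{i=1}^N |h_i(t)|^2$ is Gamma-distributed with shape $N$ and scale $\beta_I$. Conditioning on $I$ and using $\mathbb{P}(|g(t)|^2 < \s I \mid I) = 1 - e^{-\s I/\beta_0}$, I would integrate against the Gamma density of $I$, so that
\begin{align*}
\mathbb{P}(S(t) < \s) = 1 - \frac{1}{\Gamma(N)\beta_I^N}\int_0^\infty y^{N-1} e^{-y\left(\frac{\s}{\beta_0} + \frac{1}{\beta_I}\right)} dy.
\end{align*}
The remaining integral is the standard Gamma integral, equal to $\Gamma(N)(\s/\beta_0 + 1/\beta_I)^{-N}$, which gives the compact form
\begin{align*}
\mathbb{P}(S(t) < \s) &= 1 - \frac{\beta_0^N}{(\beta_0 + \beta_I \s)^N} \\
&= \frac{(\beta_0 + \beta_I \s)^N - \beta_0^N}{(\beta_0 + \beta_I \s)^N}.
\end{align*}
This is the classical CDF of the SIR for a Rayleigh desired link with $N$ i.i.d.\ Rayleigh interferers.

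Finally, I would substitute this CDF together with the LCR expression of Theorem~\ref{lcr_thm} into \eqref{afd}. The factor $(\beta_0 + \beta_I \s)^N$ in the denominator of the CDF cancels the identical factor in the denominator of $\lcr$, and reorganising the remaining constants — moving the $\sqrt{\beta_0}$ up from the $\sqrt{2\pi\beta_I \s/\beta_0}$ term in $\lcr$ and combining it with the $\beta_0^N$ in the numerator of $\lcr$, giving $\beta_0^{2N-1}$ under a single root — collapses everything into the radical $\sqrt{2\pi\beta_I \beta_0^{2N-1} \s}$ and yields the stated expression. I do not anticipate a genuine obstacle here: the CDF is an elementary exponential-over-Gamma ratio, and the rest is bookkeeping of constants; the only step requiring a little care is tracking the exponents of $\beta_0$ and $\beta_I$ through the simplification so that they land correctly under that square root.
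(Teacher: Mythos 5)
Your proposal is correct and follows essentially the same route as the paper: both compute the marginal CDF $\mathbb{P}(S(t)<\s)=1-\beta_0^N/(\beta_0+\beta_I\s)^N$ and divide it by the LCR from Theorem~\ref{lcr_thm}. The only cosmetic difference is that you derive the CDF by conditioning on the interference and integrating, whereas the paper simply cites the $F$-distribution of the ratio $r_0^2/r_I^2$; the final bookkeeping of constants is identical.
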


\begin{proof}
See Appendix \ref{prf_afd_thm}.
\end{proof}

For the special case $\beta_0 = \beta_I = \beta$, the LCR and AFD are reduced to
\begin{align}
\lcr &= \frac{\Gamma(N+\frac{1}{2})}{\lambda\Gamma(N)(1+\s)^N} \sqrt{2\pi \s},\\
\afd &= \frac{\lambda\Gamma(N)}{\Gamma(N+\frac{1}{2})}\frac{(\s+1)^N-1}{\sqrt{2\pi \s}}.
\end{align}
We also look at two cases of interest concerning extreme values of the threshold $\s$. For small values of $\s$, we have
\begin{align}
\lcr &\sim \frac{\Gamma(N+\frac{1}{2})}{\lambda\Gamma(N)}\sqrt{\frac{2\pi \beta_I \s}{\beta_0}},\\
\afd &\sim \lambda\frac{\Gamma(N+1)}{\Gamma(N+\frac{1}{2})} \sqrt{\frac{\s\beta_I}{2\pi\beta_0}}.
\end{align}
On the other hand, for large $\s$ values,
\begin{align}
\lcr &\sim \sqrt{2\pi}\frac{\Gamma(N+\frac{1}{2})}{\lambda\Gamma(N)}\left(\frac{\beta_0}{\beta_I \s}\right)^{N-\frac{1}{2}},\\
\afd &\sim \frac{\lambda\Gamma(N)}{\sqrt{2\pi} \Gamma(N+\frac{1}{2})} \left(\frac{\beta_I \s}{\beta_0}\right)^{N-\frac{1}{2}}.
\end{align}

Next, we focus on the performance of $S^*$ and provide a bound on its CDF, that is, the outage probability.
\begin{theorem}\label{cdf_thm}
The CDF of $S^*$ is lower bounded by
\begin{align}
F_{S^*}(\s) \geq 1 &- \frac{\beta_0^N}{(\s\beta_I+\beta_0)^N}\nonumber\\
&\qquad\times\left(1 + T\frac{\Gamma(N+\frac{1}{2})}{\lambda\Gamma(N)} \sqrt{\frac{2\pi\beta_I \s}{\beta_0}}\right).
\end{align}
\end{theorem}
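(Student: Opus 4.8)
The plan is to control the tail event $\{S^*>\s\}$ by a union bound that splits it into a ``starts above the level'' part and an ``upcrosses the level'' part, and then to apply Markov's inequality to the number of upcrossings together with Theorem~\ref{lcr_thm}. First I would note that the SIR process $S(t)$ has almost surely continuous (indeed differentiable) sample paths, since the underlying complex Gaussian channels have the smooth spatial correlation \eqref{correlation}. Consequently, if $S^*>\s$ then either $S(0)>\s$, or $S(0)\le\s$ and, by continuity of the path $t\mapsto S(t)$, there is at least one upcrossing of the level $\s$ in $(0,T]$. Writing $N_u$ for the number of upcrossings of $\s$ by $S(t)$ on $[0,T]$, this yields $\{S^*>\s\}\subseteq\{S(0)>\s\}\cup\{N_u\ge1\}$, hence $\PP(S^*>\s)\le\PP(S(0)>\s)+\PP(N_u\ge1)$.

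Next I would evaluate the two terms. The marginal law of $S(t)$ at a fixed position is that of the ratio of an exponential variate of mean $\beta_0$ to an independent ${\rm \Gamma}(N,\beta_I)$ variate; conditioning on the interference power and then using the Laplace transform of the gamma distribution gives $\PP(S(t)>\s)=\beta_0^N/(\beta_0+\beta_I\s)^N$. For the second term, stationarity of $S(t)$ in the spatial variable makes the expected number of upcrossings additive over subintervals, so $\E[N_u]=T\cdot\lcr$; Markov's inequality then gives $\PP(N_u\ge1)\le\E[N_u]=T\cdot\lcr$. Substituting the closed form of $\lcr$ from Theorem~\ref{lcr_thm}, adding the two contributions, and factoring out $\beta_0^N/(\beta_0+\beta_I\s)^N$ reproduces exactly the bracketed expression in the statement; taking complements, $F_{S^*}(\s)=1-\PP(S^*>\s)$ is bounded below as claimed.

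The marginal CDF computation and the closing algebra are routine. The step requiring the most care is the identity $\E[N_u]=T\,\lcr$ together with the reduction of $\{S^*>\s\}$ to the endpoint-plus-upcrossing event: one needs (i) wide-sense stationarity of the SIR process in $t$, which holds because the Jake's model correlation depends only on the spacing $\tau$, so that the expected upcrossing count is additive and hence equals $T\,\lcr$; and (ii) almost-sure sample-path continuity, so that a path which attains a value above $\s$ having started at or below $\s$ must genuinely upcross the level, together with $\PP(S(0)=\s)=0$. Finally, the reason this is only a lower bound on $F_{S^*}$ rather than an equality is precisely the slack introduced by the union bound and by Markov's inequality: sample paths that start above $\s$, and those that cross $\s$ several times, are over-counted. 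This is why the bound is expected to be accurate only in the upper tail (large $\s$), where such multiple excursions become rare.
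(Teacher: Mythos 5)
Your proposal is correct and follows essentially the same route as the paper: the same decomposition of $\{S^*>\s\}$ into the event $\{S(0)>\s\}$ plus an upcrossing event, the same first-moment (Markov) bound $\PP(N_u\ge 1)\le \E[N_u]=T\cdot\lcr$, and the same marginal CDF $\PP(S(0)>\s)=\beta_0^N/(\beta_0+\beta_I\s)^N$. Your added remarks on sample-path continuity, stationarity, and the source of slack in the bound are sound and merely make explicit what the paper leaves implicit.
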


\begin{figure}\centering
	\includegraphics[width=0.95\figwidth]{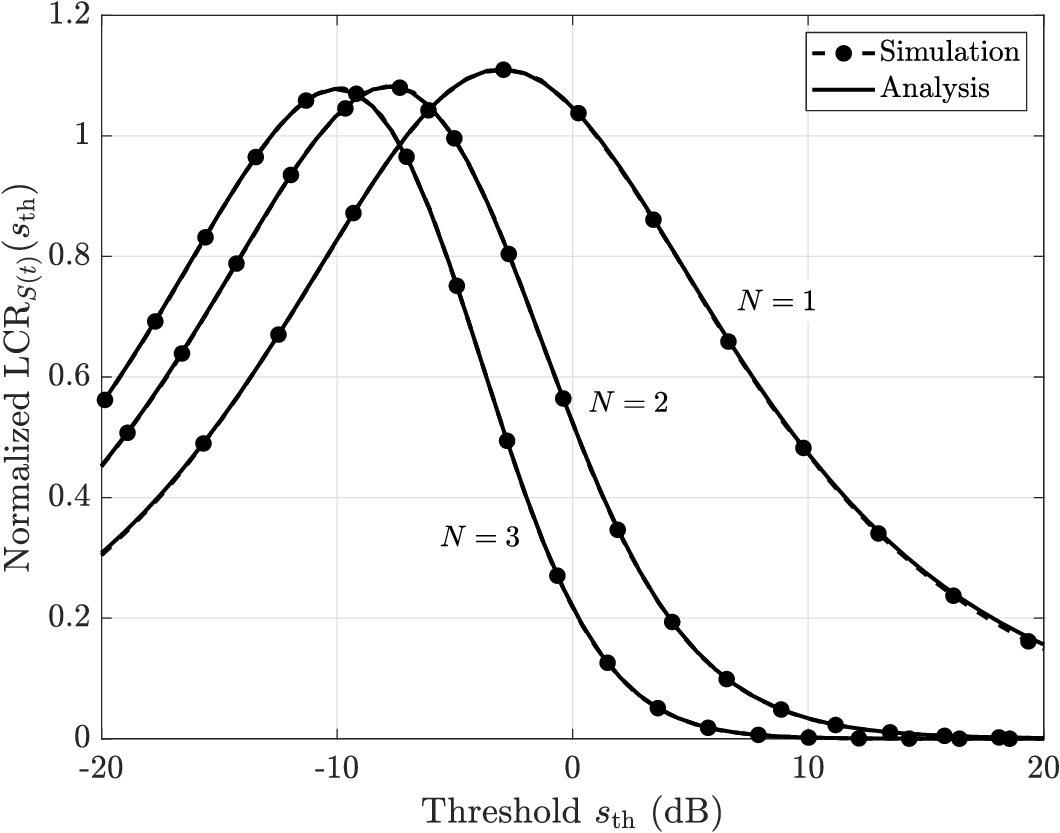}\vspace{-2mm}
	\caption{LCR with respect to threshold $\s$.}\label{fig2}\vspace{-1mm}
\end{figure}

\begin{proof}
See Appendix \ref{prf_cdf_thm}.
\end{proof}

Note that for very small values of $T$ we have
\begin{align}
	F_{S^*}(\s) \sim 1 - \frac{\beta_0^N}{(\s\beta_I+\beta_0)^N},
\end{align}
which corresponds to the CDF of $S(t)$, that is, the performance of displacing the liquid to a random position $t$, since the process $S(t)$ over a very small $T$ is strongly correlated. Moreover, Theorem \ref{cdf_thm} is asymptotically correct in the high upper tail ($\s \to \infty$) where we have
\begin{align}
F_{S^*}(\s) &\sim 1-\frac{\beta_0^N}{\s^N\beta_I^N}\Bigg(1 + T\frac{\Gamma(N+\frac{1}{2})}{\lambda\Gamma(N)}\sqrt{\frac{2\pi\beta_I \s}{\beta_0}}\Bigg)\nonumber\\
&= O(\s^{1/2-N}),
\end{align}
where $O(\cdot)$ is the big-O notation. The high upper tail region (high thresholds $\s$) mainly refers to the low transmit power regime, which is of particular interest for low-power energy efficient wireless networks such as the Internet-of-Things.

\begin{figure}\centering
	\includegraphics[width=0.95\figwidth]{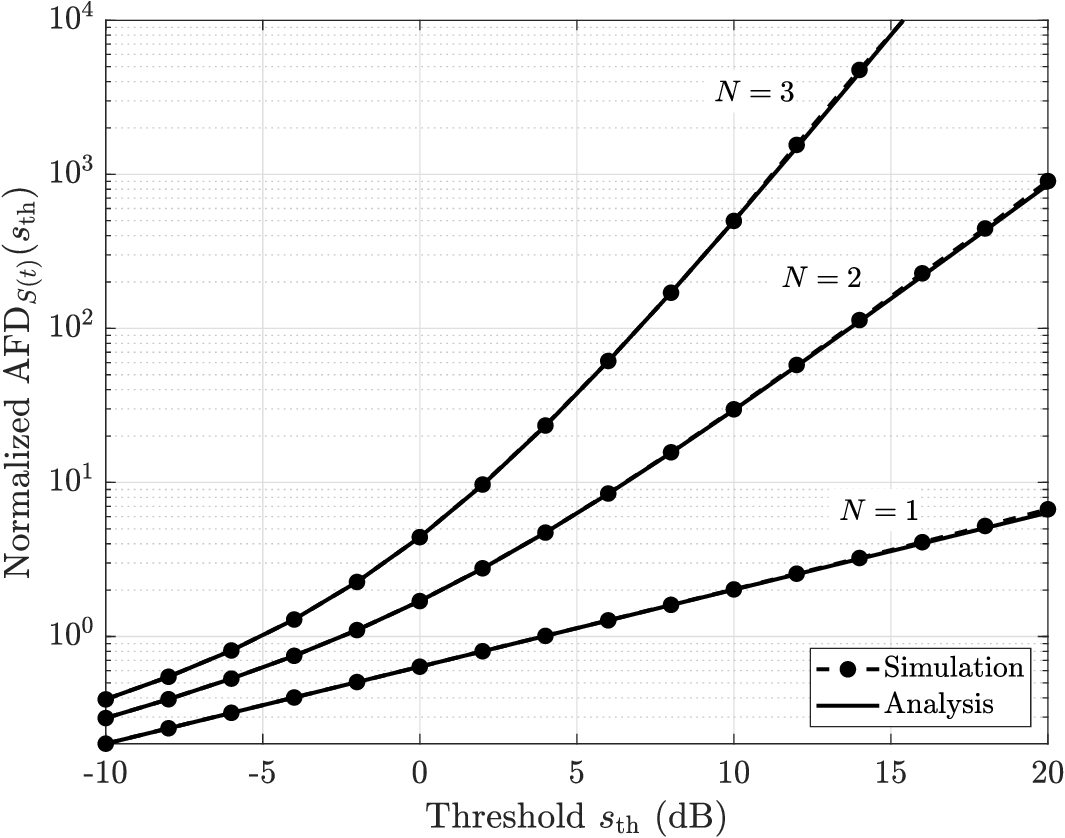}\vspace{-2mm}
	\caption{AFD with respect to threshold $\s$.}\label{fig3}\vspace{-1mm}
\end{figure}

\section{Numerical Results}
In this section, we validate our theoretical analysis with Monte Carlo simulations. For the sake of presentation, we consider $\beta_0 = 1$, $\beta_I = 2$, $\lambda = 1$ cm and up to three interferers, i.e. $N \in \{1,2,3\}$.

Fig. \ref{fig2} and Fig. \ref{fig3} depict the normalized LCR ($\lambda\lcr$) and the normalized AFD ($\afd/\lambda$), respectively, versus the threshold $\s$. As expected, the SIR crosses lower to medium threshold levels more frequently in the presence of more interferers. On the other hand, the SIR with $N=1$ achieves more frequent crosses at the medium to high thresholds regime compared to larger $N$. For the AFD, the SIR lies less time at lower threshold levels, whereas it stays longer at higher levels as the number of interferers increases. Finally, the simulation and analytical curves
perfectly match, which validates our theoretical analysis.

In Figs. \ref{fig4}(a) and \ref{fig4}(b), we show the CDF of $S^*$ for $T = 0.3$ cm and $T = 1$ cm, respectively. We can observe that a longer FA provides a better performance, as expected. Moreover, the significant performance losses as $N$ increases can be clearly seen. Note that the theoretical curve for the lower bound is accurate at the upper tail, which validates our claim in the previous section. What is more, the provided bound becomes tighter as $T$ decreases.

Finally, Fig. \ref{fig5} compares the CDF of a CFAS with the one obtained from a discrete FAS (DFAS) with $10$ ports. The CFAS outperforms the DFAS, in particular for medium to large threshold values, and thus provides	the performance limits of an FAS, as expected. Moreover, we illustrate the effect of the correlation model on the CDF. Towards this end, we consider the model for three-dimensional isotropic scattering, given by $\rho(\tau) = {\rm sinc} (2\pi \tau/\lambda)$ \cite{HEATH} but any other model satisfying \eqref{correlation2} can be easily adopted. In this case, the analysis follows the same steps but with $b = 2\pi^2/3\lambda^2$. This model increases the autocorrelation over the SIR process, which explains the gap between the two curves.

\begin{figure}\centering
	\includegraphics[width=0.95\figwidth]{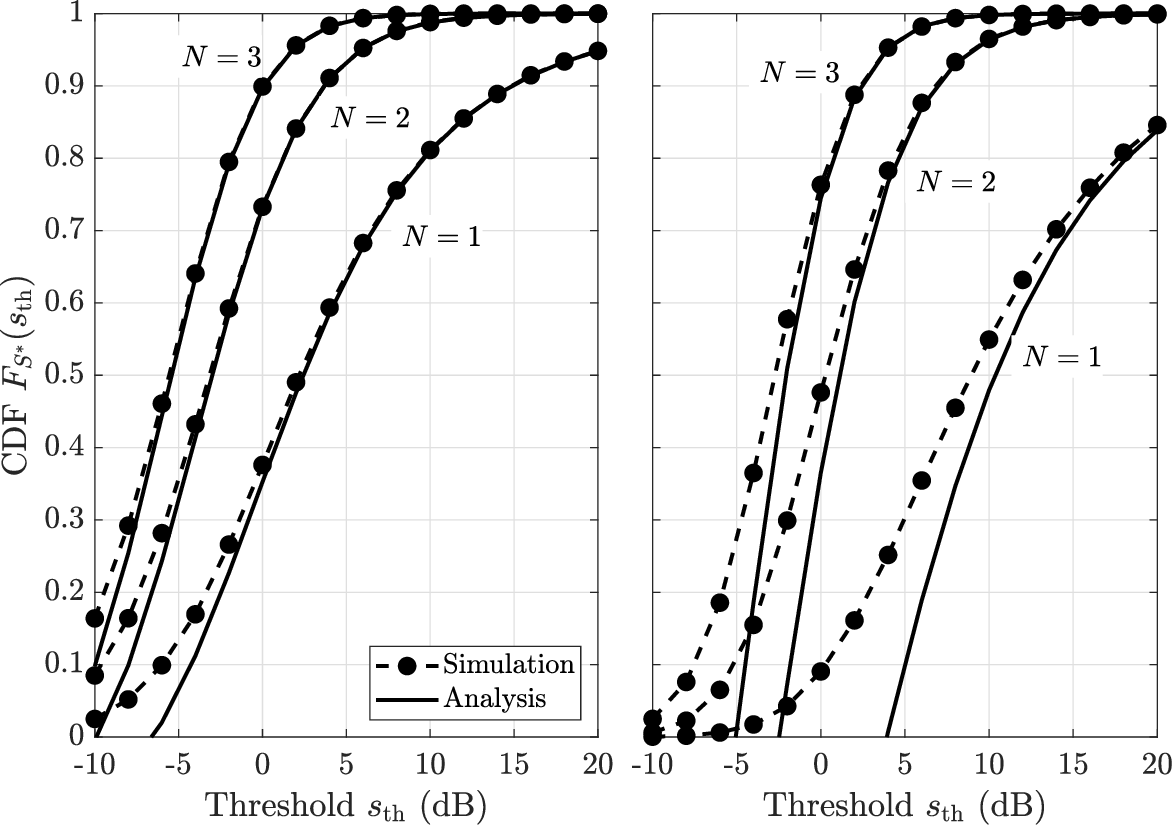}\\\hspace{2mm}
{\footnotesize (a) $T=0.3$ cm\hspace{16.5mm} (b) $T=1$ cm}\vspace{-1mm}
	\caption{CDF versus threshold $\s$.}\label{fig4}
\end{figure}

\begin{figure}\centering
	\includegraphics[width=0.95\figwidth]{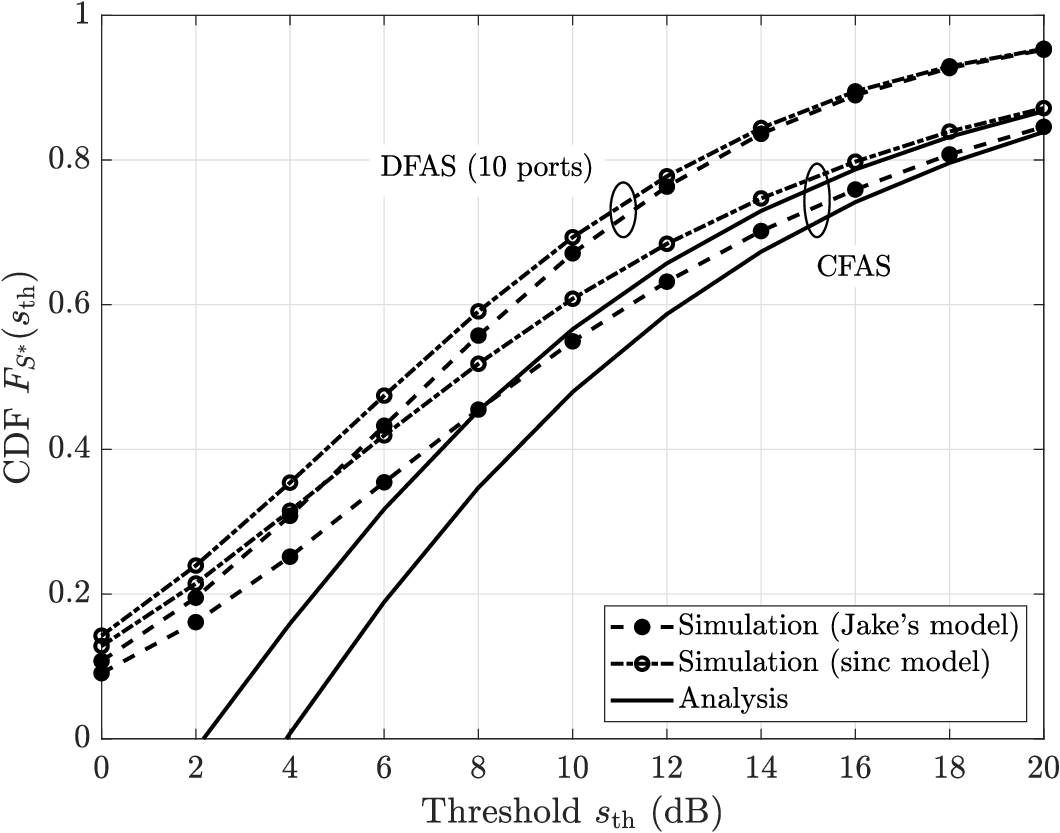}\vspace{-1mm}
	\caption{Comparison with the discrete case; $T=1$ cm and $N = 1$.}\label{fig5}
\end{figure}

\section{Conclusion}
In this paper, we considered a CFAS in which the liquid's position follows a continuous process. We derived closed-form expressions for the SIR's LCR and the AFD as well as a lower bound on the CDF of the SIR's supremum. We validated our analysis with computer simulations, which confirmed the accuracy of the LCR and AFD expressions but also the effectiveness of the CDF bound at the upper tail. Our results demonstrate the performance limits of an FA-based system. The provided framework is general and is suitable for designing and evaluating practical FA-based wireless systems. Future directions include a tigher bound for the lower tail or an exact expression for the SIR's supremum CDF.

\appendix
\subsection{Proof of Theorem \ref{lcr_thm}}\label{prf_lcr_thm}
It is known that the time derivative $\dot{r}$ of a Nakagami distributed $r(t)$ is independent of $r(t)$ and follows a Gaussian distribution with zero mean and variance $ b\Omega/m$, where $m$ and $\Omega$ are the shape and spread parameter, respectively \cite{NZ}, and $b$ is taken from \eqref{correlation2}. In our case, $b = \pi^2/\lambda^2$ and $\Omega = m$, so the derivatives of both $r_0(t)$ and $r_I(t)$ have variance $\pi^2/\lambda^2$. Now, let\vspace{-2mm}
\begin{align}
R(t) = \frac{r_0(t)}{r_I(t)} \implies \dot{R} = \frac{\dot{r}_0 r_I - \dot{r}_I r_0}{r_I^2}.
\end{align}
Then, as in \eqref{lcr}, the LCR of $R(t)$ across a threshold $r_{\rm th}$ is
\begin{align}
{\rm LCR}_{R(t)}(r_{\rm th}) = \int_0^\infty \dot{r} f_{R,\dot{R}}(r_{\rm th},\dot{r}) d\dot{r},
\end{align}
where $f_{R,\dot{R}}(\cdot,\cdot)$ is the joint PDF of $R$ and $\dot{R}$. We can write
\begin{align}
f_{R,\dot{R}}(r_{\rm th},\dot{r}) = \int_0^\infty f_{\dot{R}|R, r_I}(\dot{r}|r_{\rm th},y) f_{R,r_I}(r_{\rm th},y) dy,
\end{align}
and hence\vspace{-1mm}
\begin{align}
{\rm LCR}_{R(t)}(r_{\rm th}) = \int_0^\infty \int_0^\infty \dot{r}& f_{\dot{R}|R, r_I}(\dot{r}|r_{\rm th},y)\nonumber\\
&\times f_{R,r_I}(r_{\rm th},y) dy d\dot{r}.
\end{align}
Conditioned on $R = r_{\rm th}$ and $r_I = y$, we have that
\begin{align}
\dot{R} = \frac{\dot{r}_0}{y} - \frac{\dot{r}_I r_{\rm th}}{y} \sim \mathcal{N}\left(0, \frac{b}{y^2} + \frac{b r_{\rm th}^2}{y^2}\right),
\end{align}
with $b = \pi^2/\lambda^2$. Therefore, by using
\begin{align}
f_{\dot{R}|R,r_I}(\dot{r}|r_{\rm th},y) = \frac{1}{\sqrt{2\pi \left(\frac{b + b r_{\rm th}^2}{y^2}\right)}} \exp\left(-\frac{\dot{r}^2 y^2}{2(b + b r_{\rm th}^2)}\right),
\end{align}
and the fact that $\int_0^\infty x\exp(-x^2/a) dx = a/2$, we have
\begin{align}
{\rm LCR}_{R(t)}(r_{\rm th}) &= \int_0^\infty \frac{b + b r_{\rm th}^2}{y^2 \sqrt{2\pi \left(\frac{b + b r_{\rm th}^2}{y^2}\right)}} f_{R,r_I}(r_{\rm th},y) dy\nonumber\\
&=\sqrt{\frac{b + b r_{\rm th}^2}{2\pi}} \int_0^\infty \frac{1}{y} f_{R,r_I}(r_{\rm th},y) dy.\label{lcr1}
\end{align}
Now, by using standard transformation theory, we go from $(X,Y) = (r_0^2, r_I^2)$ to $(U,V) = (r_0/r_I, r_I)$. The forward transformations are $V = \sqrt{Y}$, $U = \sqrt{X/Y}$ and the inverse transformations are $Y=V^2$, $X=U^2V^2$. Thus, the Jacobian matrix is given by\vspace{-1mm}
\begin{align}
\mathbf{J} =
\begin{pmatrix}
2 u v^2 & 0\\
2 v u^2 & 2v
\end{pmatrix}.
\end{align}
Hence, we have\vspace{-1mm}
\begin{align}
&f_{U,V}(u,v) = f_X(u^2v^2) f_Y(v^2) 4uv^3\nonumber\\
&= \exp(-u^2v^2) \frac{1}{\Gamma(N)}v^{2(N-1)}\exp(-v^2)4uv^3,
\end{align}
since $X \sim \exp(1)$ and $Y \sim {\rm \Gamma}(N,1)$. By substituting the variable names $(U,V) = (r_0/r_I, r_I)$, we have
\begin{align}
&f_{R, r_I}(r_{\rm th},y) = \frac{4r_{\rm th}}{\Gamma(N)}\exp(-y^2(r_{\rm th}^2+1)) y^{2N+1}.\label{lcr2}
\end{align}
Then, by substituting \eqref{lcr2} in \eqref{lcr1}, we obtain
\begin{align}
{\rm LCR}_{R(t)}(r_{\rm th}) &= \sqrt{\frac{b+b r_{\rm th}^2}{2\pi}}\nonumber\\
&\quad\times \int_0^\infty \frac{4r_{\rm th}}{\Gamma(N)} y^{2N} \exp(-y^2(r_{\rm th}^2+1)) dy.
\end{align}
We know that $\int_0^\infty y^n \exp(-a y^2) = \frac{a^{-(n+1)/2}}{2} \Gamma\left(\frac{n+1}{2}\right)$ \cite{ISG}, and so
\begin{align}
\lcr_{R(t)}(r_{\rm th}) &= \sqrt{\frac{b+b r_{\rm th}^2}{2\pi}} \frac{4r_{\rm th}\Gamma\left(N+\frac{1}{2}\right)}{2\Gamma(N)(r_{\rm th}^2+1)^{N+1/2}}\nonumber\\
&= \sqrt{\frac{b+b r_{\rm th}^2}{2\pi}}\frac{2r_{\rm th}\Gamma\left(N+\frac{1}{2}\right)}{\Gamma(N)(r_{\rm th}^2+1)^{N+1/2}}.
\end{align}
Converting back to the original process
\begin{align}
\lcr = {\rm LCR}_{R(t)}(\sqrt{\beta_I \s/\beta_0}),
\end{align}
since $S(t) = \beta_0 R(t)^2/\beta_I$. Therefore,
\begin{align}
\lcr &= \sqrt{\frac{b+b \beta_I \s/\beta_0}{2\pi}}\nonumber\\
&\quad\times\frac{2 \sqrt{\beta_I \s/\beta_0}\Gamma\left(N+\frac{1}{2}\right)}{\Gamma(N)(\beta_I \s/\beta_0+1)^{N+1/2}},
\end{align}
and after several algebraic operations the theorem is proven.

\subsection{Proof of Theorem \ref{afd_thm}}\label{prf_afd_thm}
We first look at the CDF of $S(t)$, that is,
\begin{align}
\mathbb{P}(S(t) < \s) = \mathbb{P}\left(\frac{r_0^2(t)}{r_I^2(t)} < \frac{\beta_I}{\beta_0}\s\right).
\end{align}
The ratio $r_0^2(t)/r_I^2(t)$ has an $F$-distribution \cite{FD} with CDF
\begin{align}\label{cdf}
\mathbb{P}(S(t) < \s) &= 1-\frac{1}{(\s\beta_I/\beta_0+1)^N}\nonumber\\
&= 1- \left(\frac{\beta_0}{\s\beta_I+\beta_0}\right)^N.
\end{align}
As such, from \eqref{afd} and Theorem \ref{lcr_thm}, we have
\begin{align}
\afd &= \left[1-\left(\frac{\beta_0}{\s\beta_I+\beta_0}\right)^N\right] \sqrt{\frac{\beta_0}{2\pi\beta_I \s}}\nonumber\\
&\quad\times \frac{\lambda\Gamma(N)(\beta_0+\beta_I \s)^N}{\beta_0^N\Gamma(N+\frac{1}{2})},
\end{align}
and the result follows.

\subsection{Proof of Theorem \ref{cdf_thm}}\label{prf_cdf_thm}
We use an approach similar to \cite{MRL}. Specifically, we have
\begin{align}
\PP\left(S^* > \s\right) &= \PP\left(S^* > \s, S(0) < \s\right) \nonumber\\
&\qquad\qquad\qquad+\PP\left(S^* > \s, S(0) > \s\right)\nonumber\\
&= \PP(S^* > \s, S(0) < \s) + \PP(S(0) > \s),
\end{align}
where $S^*$ is the supremum of $S(t)$ and $S(0)$ is the SIR at the $0$-th position. Now, let $E$ be the event that at least one upcrossing across $\s$ occurs. Then,
\begin{align}
&\PP(S^* > \s, S(0) < \s) \leq \PP(E),
\end{align}
since
\begin{align}
&\{E\} = \{E \cap S(0) > \s\} \cup \{E \cap S(0) < \s\}\nonumber\\
&\qquad=\{E \cap S(0) > \s\} \cup \{S^* > \s, S(0) < \s\}.
\end{align}
Moreover,
\begin{align}
\PP(E) &= \PP({\rm one ~upcrossing})+\PP({\rm two ~upcrossings}) + \dots\nonumber\\
&\leq \E[{\rm no. ~of ~upcrossings}] = T\times \lcr.
\end{align}
Therefore, we have that
\begin{align}
\mathbb{P}(S^* > \s) \leq \mathbb{P}(S(0) > \s) + T\times\lcr,
\end{align}
where $\mathbb{P}(S(0) > \s)$ and $\lcr$ are given by \eqref{cdf} and Theorem \ref{lcr_thm}, respectively. Finally, the result follows from $F_{S^*}(\s) = 1-\mathbb{P}(S^* > \s).$

\end{document}